\documentclass{article}

\usepackage{fullpage}
\usepackage{epsfig}
\usepackage{amsmath}
\usepackage{amsthm}
\usepackage{amssymb}
\usepackage[usenames,dvipsnames]{xcolor}
\usepackage{ifpdf,cite,url}
\usepackage[draft]{hyperref}

\newtheorem{lemma}{Lemma}
\newtheorem{proposition}{Proposition}
\newtheorem{theorem}{Theorem}

\DeclareMathOperator*{\argmin}{arg\,min}

\DeclareMathOperator*{\img}{img}
\newcommand\inv{{-1}}

\begin{document}

\title{Computing minimum area homologies}


\author{
 Erin Wolf Chambers\thanks{Dept. of Math and Computer Science, Saint Louis University, {\tt echambe5@slu.edu}.  Research supported in part by the NSF under Grant No. CCF-1054779 and IIS-1319573.}
\and
Mikael Vejdemo-Johansson\thanks{Institute for Mathematics and its
  Applications, University of Minnesota, Minneapolis. Computer Vision
  and Active Perception Laboratory, KTH Royal
  Institute of Technology, Stockholm. AI Laboratory, Jozef Stefan
  Institute, Ljubljana. \texttt{mvj@kth.se}. Research supported in
  part by the FP7 project \textsc{Toposys} (FP7-ICT-318493-STREP).}
}
\date{}

\maketitle

\small
\begin{abstract}
Calculating and categorizing the similarity of curves is a fundamental problem which has generated much recent interest.  However, to date there are no implementations of these algorithms for curves on surfaces with provable guarantees on the quality of the measure.  In this paper, we present a similarity measure for any two cycles that are homologous, where we calculate the minimum area of any homology (or connected bounding chain) between the two cycles.  The minimum area homology exists for broader classes of cycles than previous measures which are based on homotopy.  It is also much easier to compute than previously defined measures, yielding an efficient implementation that is based on linear algebra tools.  We demonstrate our algorithm on a range of inputs, showing examples which highlight the feasibility of this similarity measure.
\end{abstract}

\section{Introduction}
\label{sec:introduction}

Recently, much work has been done in the area of similarity measures for curves in topological spaces.  For example, in the graphics, vision, and medical imaging communities, the goal is often to measure  similarity between objects~\cite{517122,1314502}, with the goal of categorizing or recognizing similar structures.  Questions regarding similarity between maps or trajectories within a map are motivated by the vast quantity of data from GIS systems; see for example~\cite{mapsurvey} for a recent survey of some algorithms in this area.

In the computational geometry community, much of the focus has been on the Fr\'echet distance in Euclidean space~\cite{ag-cfdbt-95,alt2009} and its generalizations such as geodesic and homotopic Fr\'echet distance~\cite{CW10,CVE08}.  However, none of these have proven algorithmically tractable to compute on surfaces.  There is also work giving an algorithm to compute the minimum area homotopy between two curves or cycles on surfaces; somewhat surprisingly, computing area is much more practical, giving algorithms which are either quadratic or linear in time (depending on the exact setting)~\cite{homotopyarea}.  Although this minimum homotopy algorithm has not been implemented, it is the first theoretically tractable algorithm in the surface setting.

The use of homology rather than homotopy was suggested in prior work
as a natural measure to consider~\cite{homotopyarea}; this is perhaps
especially appealing given that homology computations often have  extremely
efficient implementations which are based on operations over vector
spaces.  For example, there are efficient algorithms which compute the
minimum cycle homologous to an input cycle~\cite{dhk2011}.  In
addition, there is a large body of work on calculating homology groups
and generators in a variety of settings. In addition to being a
standard exercise in linear algebra, tools for computing simplicial
homology -- either of triangulations or induced from point cloud data
-- are commonplace, and occur in software packages including GAP~\cite{gap,hap,pgap},
Magma~\cite{magma}, and Sage~\cite{sage}, as well as more specialized libraries such as ChomP~\cite{chomp},
javaPlex~\cite{javaplex}, Dionysus~\cite{dionysus},
Perseus~\cite{perseus}, pHat~\cite{phat} and Kenzo~\cite{kenzo}.

However, the concept of finding a minimum area homology has not yet been investigated from the algorithmic  perspective.  This is perhaps due to the fact that  homotopy describes a more natural notion of a continuous deformation, while with homology it is much less clear what geometric structure is described.  There is a clear connection between the two; indeed, for higher dimensional manifolds, the problem of finding a minimum homotopy reduces to that of finding a minimum homology~\cite{white1984}.  Moreover, for the case of cycles on a surface, two homotopic cycles clearly have a homology between them, and actions such as puncturing the surface or introducing high weight simplices will often force the minimum area homology to agree with the minimum homotopy.

In this paper, we formalize a notion of a homology between two cycles with minimum area, and prove several interesting properties.  We also implement algorithms to compute this area, concluding with a series of examples  to demonstrate the success of this measure.

It is also worth noting that by working with homology instead of homotopy, we can construct bounding
chains for a wider class of structures on a surface. Recall that two
(not necessarily connected) dimension $k$ submanifolds $S, T$ of a manifold
$M$ are said to be \emph{cobordant} if there is a dimension $k+1$
submanifold $W$ of $M$ with boundary $S\sqcup T$. The connecting
submanifold $W$ is said to be the \emph{cobordism} from $S$ to $T$.
Cobordisms are one of the most important techniques in contemporary
algebraic and differential topology, and were fundamentally important
in some of the most famous theorems in the field, including the
Hirzebruch-Riemann-Roch theorem~\cite{hrr} and the Atiyah-Singer index
theorem~\cite{asit}. A fundamental survey of cobordisms and their uses
was written by Milnor in 1962~\cite{milnor}.

For two arbitrary closed collections $S, T$ of disjoint simple curves on a surface, computing a
connecting chain generates a cobordism from $S$ to $T$.  So while we we focus solely on minimum area homologies in this work, our algorithm to compute a connecting chain generalizes to significantly more complex structures in higher dimensions.

\section{Definitions}
\label{sec:definitions}

In this paper we make extensive use of simplicial homology for triangulated
surfaces, and its properties.  We provide a brief overview of this area, and refer to the reader to a standard reference on topology for full details~\cite{h-at-02,m-t-00}.

Recall that given a triangulation of a
surface $M$ with
vertices $V$, edges $E$ and faces $F$, there are \emph{chain vector
  spaces} $\mathbb{R}^V, \mathbb{R}^E, \mathbb{R}^F$. Vectors in these
vector spaces correspond to weighted collections of faces, edges,
vertices. There is a linear \emph{boundary operator} $\partial_2:
\mathbb{R}^F\to \mathbb{R}^E$ and
$\partial_1:\mathbb{R}^E\to\mathbb{R}^V$. Closed curves composed out of the triangulation edges are
represented by vectors in $\ker\partial_1$, and curves that occur as
boundaries of collections of triangles are closed curves and are in $\img\partial_2$, so that $\img \partial_2 \subseteq \ker\partial_1$. The
vector space of \emph{essential closed curves} is represented by the
homology group $H_1(M; \mathbb{R})=\ker\partial_1/\img\partial_2$. In
particular, we call two vectors of edge combinations $z, w$
\emph{homologous} if $z-w\in\img\partial_2$; in this case, $z$ and
$w$ represent the same essential closed curve.

We can now consider a \emph{minimum area homology} between $z$ and $w$
- namely, a connected bounding chain of triangles $c$ where $\partial c = z-w$ and
the area of $c$ is the infimum over all possible such bounding chains. Here, and later in the paper, we use the term \emph{bounding chain} to denote precisely some $c$ such that $\partial c=z-w$. For instance, a simple closed curve on the surface not encycling any tunnels or holes would have the interior disk as its bounding chain.
This minimum area measure between curves is naturally commutative and
satisfies a notion of the triangle inequality, since homology is an
equivalence relation, and therefore transitive.

Homotopy, the more usual measure of similarity in this area, is perhaps more intuitive, since a \emph{homotopy} between two cycles is simply a continuous deformation between them.  Homology is a coarser measure than homotopy, since the homology group is a abeliazation of the homotopy group; any homotopic cycles are necessarily homologous, but the reverse is not true.  However, when calculating area, it is not clear that minimum area homotopies even exist, and great care is taken in the mathematical literature to restrict the type of integral in order to guarantee existence; see the book by Lawson for an overview of this problem and its historical developments~\cite{lawson1980lectures}.

In contrast, minimum area homologies automatically exist (so the infimum 
becomes a minimum) and are 
computable given the fact that we are working in a vector space.  In
addition,  homology has the added advantage that we can consider
more than simple cycles which are continuously deformable to each
other, since homology allows one to deform ``over" handles and other
topological features.   See Figure~\ref{fig:homotopyversushomologyonsurface} for an example of homologous cycles on a surface which are not homotopic, as well as Figures~\ref{fig:chair} and \ref{fig:crab}.

\begin{figure}
  \centering
  \includegraphics[width=3in]{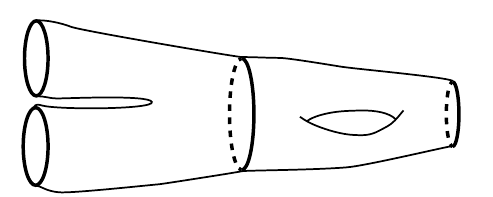}
  \caption{Two examples of homologous cycles on a surface that are not
    homotopic. The middle cycle is homologous both to the right cycle
    and to the sum of the left two cycles; the rightwards homology is
    not a homotopy due to the hole in the surface, the leftwards one
    due to the split of one cycle into two. We will see examples of
    both behaviours later in the paper, in Figure~\ref{fig:chair}.}
  \label{fig:homotopyversushomologyonsurface}
\end{figure}


\section{Connection to homotopy}
\label{sec:minim-area-homol}

In this section, we first focus on the case of finding the minimum area homology for two homotopic input cycles, since we expect this to be the most interesting application of our algorithm.  The following lemmas establish existence of minimum area homologies in this case and the connection between the minimum area homology and minimum area homotopy.

\begin{lemma}
\label{lem:homotopy}
Suppose we have two cycles $z,w: S^1 \rightarrow M$ on a manifold where $z$ is homotopic to $w$.  Then there exists a connected bounding chain $c$ such that $\partial c = z-w$.
\end{lemma}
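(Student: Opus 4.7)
The plan is to construct the bounding chain $c$ directly from the homotopy between $z$ and $w$. By hypothesis there is a continuous map $H \colon S^1 \times [0,1] \to M$ with $H(\cdot,0) = z$ and $H(\cdot,1) = w$. The domain $S^1 \times [0,1]$ is a connected compact $2$-manifold whose oriented boundary is exactly the formal difference of the two circles carrying $z$ and $w$; this is the geometric fact that drives the entire argument.

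Next I would triangulate the cylinder compatibly with simplicial representatives of $z$ and $w$ on its two boundary components, and then invoke the simplicial approximation theorem (subdividing if necessary) to replace $H$ with a simplicial map $\tilde H \colon K \to M$, where $K$ is a triangulation of $S^1 \times [0,1]$. Letting $[K]$ denote the fundamental $2$-chain of $K$ (the sum of its triangles with consistent orientation), I would define $c = \tilde H_*[K]$. By naturality of the boundary operator, $\partial c = \tilde H_*(\partial [K]) = z - w$, so $c$ is a bounding chain in $M$.

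What remains is the connectedness of $c$. Since $K$ is connected, its image $\tilde H(K)$ is a connected subcomplex of $M$, and the support of $c$ lies inside this image. The main obstacle is that pairs of oppositely oriented $2$-simplices of $K$ could be mapped onto the same simplex of $M$ and cancel, so the support of $c$ might fail to equal $\tilde H(K)$ and could \emph{a priori} be disconnected. I would resolve this either by a general-position argument, perturbing $\tilde H$ (equivalently, refining $K$ and choosing a generic simplicial approximation) so that the pushforward is essentially injective on $2$-simplices and its support coincides with the connected carrier $\tilde H(K)$, or by repairing $c$ after the fact, adjoining a $2$-chain in $\ker \partial_2$ supported on $\tilde H(K)$ to fill in any cancelled simplices without changing the boundary. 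Either route yields a connected bounding chain with $\partial c = z - w$.
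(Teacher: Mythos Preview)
Your approach is essentially the same as the paper's: take the homotopy $H\colon S^1\times I\to M$ and use its image as the bounding chain, observing that $S^1\times I$ is connected. The paper's proof is in fact far terser than yours---it simply sets $c=\img(H)$ without the simplicial approximation machinery or the discussion of possible cancellations in the pushforward---so your version is, if anything, more careful than the original.
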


\begin{proof}
Since $z$ and $w$ are homotopic, there is a homotopy $H: S^1 \times I \rightarrow M$ where $H(\cdot,0) = z$ and $H(\cdot,1) = w$.  We know that $S^1 \times I$ is connected, so let $c = \img(H)$.
\end{proof}

Note that if $M$ is triangulated, both cycles consist of edges in the triangulation, and there is a homotopy built from the triangulation of $M$, then this lemma holds as is for the simplicial case.

The homology equivalence class of connected bounding chains in
Lemma~\ref{lem:homotopy} is a finite dimensional affine linear space
if $M$ has a finite triangulation. The area measure is a continuous
function to the non-negative reals, and thus attains its minimum.  Therefore,
we can refer to the minimum area homology as the bounding chain which attains the infimum
in this area measure.

For well-behaved cases in a 2-sphere, we can in fact narrow the candidates down more concretely.

\begin{proposition}
\label{prop:2sphere}
Consider two disjoint cycles $z,w: S^1 \rightarrow S^2$ on a sphere, with $z$ homotopic to $w$.  Then there are only 2 supports for a connected bounding chain, one of which describes the minimum area homotopy.
\end{proposition}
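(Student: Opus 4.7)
The plan is to combine the Jordan curve theorem with the vanishing of $B_2$ on $S^2$ to reduce the space of bounding chains to a one-parameter affine line, and then tabulate which parameter values produce connected supports.

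First I would treat $z$ and $w$ as disjoint simple closed curves in the triangulation and apply the Jordan curve theorem: $z$ splits $S^2$ into two disks, one of which contains $w$, and $w$ splits that disk again. So $S^2 \setminus (z \cup w)$ has exactly three components, which I would promote to oriented 2-chains $D_z$, $D_w$ (disks) and $A$ (the annulus between the two curves), with orientations chosen so that $\partial A = z - w$ and $D_z + A + D_w = [S^2]$, the fundamental class.

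Next I would use the fact that $S^2$ has no 3-simplices to conclude that the space of 2-cycles is $\mathbb{R}\cdot[S^2]$, so every chain $c$ with $\partial c = z-w$ has the form $c_t = A + t[S^2]$ for some $t\in\mathbb{R}$. Reading off region-wise coefficients, $c_t$ has coefficient $t$ on $D_z$, $1+t$ on $A$, and $t$ on $D_w$. Since $D_z$ and $D_w$ share no edges with one another (each only meets $A$), the support of $c_t$ is connected except when the annulus drops out and both disks remain, which happens only at $t=-1$. At $t=0$ the support is $A$, and at every other value of $t$ the support is the whole triangulated $S^2$; so exactly two supports arise from connected bounding chains.

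Finally I would observe that the natural homotopy sliding $z$ across $A$ onto $w$ has image $A$ and realizes the minimum area homotopy of \cite{homotopyarea}, so one of the two enumerated supports coincides with that of the minimum area homotopy, completing the claim. The main subtlety is the orientation bookkeeping that makes the affine-line argument clean, together with the implicit assumption that $z$ and $w$ are simple; for self-intersecting inputs the tripartite decomposition of $S^2\setminus(z\cup w)$ would need to be replaced by a more careful face-by-face analysis.
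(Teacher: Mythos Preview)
Your opening move---the Jordan decomposition of $S^2 \setminus (z \cup w)$ into $D_z$, $A$, $D_w$---coincides with the paper's entire argument. The paper simply invokes the Jordan curve theorem, notes that $w$ lies on one side of $z$, and names the two bounding chains as the annular region ``between'' $z$ and $w$ and the union $D_z \cup D_w$ of the two outer disks; it never parametrizes the full affine line of bounding chains. Your parametrization $c_t = A + t[S^2]$ via $H_2(S^2;\mathbb{R}) \cong \mathbb{R}\cdot[S^2]$ is a genuine refinement: it certifies that \emph{every} bounding chain is accounted for, not merely the two obvious ones, and makes precise why only finitely many distinct supports can occur.

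There is, however, a mismatch in the output. Your two connected supports are $A$ and the whole sphere, whereas the paper's pair is $A$ and $D_z \cup D_w$. You are right that $D_z \cup D_w$ is disconnected as a $2$-chain support, yet the paper's proof lists it anyway---so either the paper is using ``connected bounding chain'' loosely (a chain \emph{connecting} $z$ to $w$ via $\partial$) or its proof does not literally match its statement. This discrepancy undermines your final step. The proposition claims that one of the two supports is that of the minimum area homotopy, and the paper's pair always contains the winner: whichever of $A$ or $D_z\cup D_w$ has smaller total area. With your pair this can fail---when $\mathrm{Area}(D_z)+\mathrm{Area}(D_w) < \mathrm{Area}(A)$, the minimum area homotopy contracts $z$ through $D_z$, drags the constant loop along an arc, and re-expands through $D_w$; the induced $2$-chain is $-(D_z+D_w)$, whose support is neither $A$ nor all of $S^2$. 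Your appeal to ``the natural homotopy sliding $z$ across $A$'' therefore assumes exactly the case that needs to be checked.
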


\begin{proof}
Consider two cycles $z$ and $w$ that are disjoint.  Fix $z$, and the Jordan curve theorem immediately implies that there are two disjoint sides of the $S^2 - z$.  Since $w$ is disjoint, it is on one side or the other, so the bounding chain is either the region of $S^2 - z - w$ that is "between" $z$ and $w$, or is the bounding chain which is the union of the disk bounded by $z$ and the disk bounded by $w$.
\end{proof}

As with Lemma~\ref{lem:homotopy}, the result here holds as is if the cycles respect an existing triangulation of the sphere.

\begin{figure}
\begin{center}
\includegraphics[width=3in]{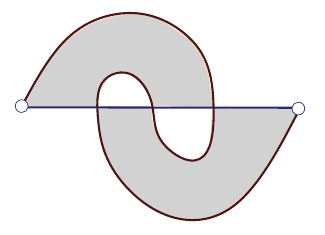}
\includegraphics[width=3in]{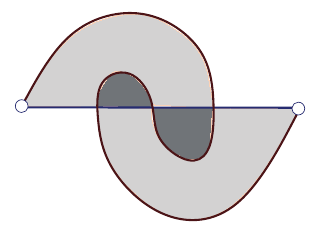}
\caption{Top: The minimum area homology between these two curves consists of the sum of the shaded areas.  Bottom: The minimum area homotopy, in contrast, must sweep regions "twice"; since homotopy does not cancel, the area of the small darkly shaded regions is counted twice in the area of the minimum homotopy, giving a strictly greater value. We revisit this type of example on a scanned mesh in Figure~\ref{fig:crab}.}
\end{center}
\label{fig:homotopy-homology}
\end{figure}

In the case that cycles intersect, the situation is more complex, since there are more than 2 connected bounding chains for such inputs.  See~\cite{homotopyarea} for a more thorough discussion of this complexity in the context of homotopy; since each homotopy class also gives a valid homology between the curves, the same issue exists in our setting.  

Since the image of a homotopy produces a connected bounding chain, we can also observe that the minimum area homotopy is bounded below in area by the minimum area homology.  However, there are cases (even for homotopy between simple curves in the plane) where the minimum area homotopy is strictly greater than the minimum area homology; see Figure~2.

\section{Algorithm}
\label{sec:algorithm}

Our approach to an algorithm and an implementation for the minimum
homology area is rooted in the extensive work available on finding
small solutions to linear systems of equations.

\subsection{Optimization problem formulation}
\label{sec:optim-probl-form}

Computing the minimum area bounding chain for a cycle $z \in C_1M$ (where here we use $z$ 
to denote the difference of the two input cycles rather than one of the two inputs)
is a question of computing a minimum area solution to a linear set of
equations. A bounding chain for $z$ is some $x$ such that $dx =
z$. Hence, we can formulate the minimum area bounding chain by solving
the optimization problem
\begin{equation}
  \label{eq:areaprob}
  \argmin_x\sum_j a(\sigma_j)|x_j| \qquad \text{subject to } \partial x=z
\end{equation}
where $a(\sigma_j)$ is the area contribution from cell
$\sigma_j$.

There are several variations of this optimization problem that turn
out to be useful for implementation, speed or practicality.

If we write $A$ for the diagonal matrix with entries $a(\sigma_j)$, we
note that $\sum_j a(\sigma_j)|x_j| = \|Ax\|_1$. We can recast 
Problem \ref{eq:areaprob} as a pure $L_1$-minimization problem. To do
this, we note that if $x'=Ax$ and all areas are non-zero, then $\partial x = \partial A^\inv x'$. Thus, $x$
minimizing $\|Ax\|_1$ subject to $\partial x=z$ can be computed by first
computing $x'$ minimizing $\|x'\|_1$ subject to $\partial A^\inv x' = z$. Thus,
we can also solve for the minimum area bounding chain by solving the
optimization problem
\begin{equation}
  \label{eq:l1prob}
  \argmin_x\|x\|_1 \qquad \text{subject to }\partial A^\inv x = z
\end{equation}

This equation captures a known optimization problem in machine
learning and compressed sensing\cite{donoho2006most}, which can be solved i.a. using
the Orthogonal Matching Pursuit (OMP)
algorithm\cite{mallat1993matching,rubinstein2008efficient}. 

It is worth noting for the
specific case of surface meshes in $\mathbb{R}^3$ that have no
boundaries, the results by \cite{dhk2011} offer up a linear
programming approach to the optimization problem. Running in $O(n^3)$,
this is an attractive option for an exact solution. However, this does not cover all the cases we consider in this paper, and restricts the
generalizability to higher dimensions. As pointed out in
\cite{dhk2011}, the presence of a M\"obius strip is sufficient to break
the total unimodularity assumption that the linear programming
approach relies on. (We would like to thank one of
our anonymous reviewers for alerting us to the linear programming approach.)

\subsection{Trade-off between $L_1$ and $L_2$}
\label{sec:trade-between-l_1}

While the computation of the $L_1$-minimization problem in
Section~\ref{sec:optim-probl-form} is a clear way to produce
an area-minimal connecting chain, this class of minimization problems
lacks closed form analytic solutions. In this section, we shall
describe a related $L_2$-minimization problem, as well as bounds that relate
it to the area-minimization problem.

It is an immediate consequence of the H\"older inequality and the
Cauchy-Schwarz inequality that $\|x\|_2 \leq \|x\|_1 \leq \sqrt{n}\|x\|_2$ for
any $x\in\mathbb{R}^n$. It follows that we can bound the area norm for
a finite triangulation with a weighted $L_2$-norm.

One way to get an approximate solution to the minimum area is to
compute $\argmin_{\hat x} \hat{x}^T \hat{x}$ such that $\partial\hat{x} =
\hat{z}$, where $\hat{x} = A^{1/2}x$.  Then $\hat{x}^T x = x^T
(A^{1/2})^T A^{1/2} x = x^T A x =\sum_j a(\sigma_j)x_j^2$.  This lets us solve $\argmin
\hat{x}^T\hat x$ such that $\partial \hat{x} = \hat{z}$, and then the appropriate $x$ is $A^{1/2}\hat{x}$.

Replacing the $L_1$ problems with $L_2$ problems in this way produces
these two formulations of area minimization as $L_2$ minimization problems:

\begin{align}
  \label{eq:l2wprob}
  \argmin_x x^T A x &\qquad \text{subject to }\partial x = z 
  \\
  \label{eq:l2prob}
  \argmin_x\|x\|_2 &\qquad \text{subject to }\partial A^{-1/2}x = z
\end{align}

A large benefit of this reformulation is that the $L_2$-problem
\ref{eq:l2prob} is the \emph{least squares} optimization problem and
has an analytic solution. Writing $W=\partial A^{-1/2}$, the vector
$z=(W^TW)^\inv W^Tw$ is a solution of minimum norm to $Wz=w$. Using
the bounds we mentioned above, this $L_2$-minimizer is an
approximation to the actual area minimal solution, and has the benefit
of being computable in matrix multiplication time (generally written as $O(n^\omega)$) with a
small constant for contemporary implementations, where $n$ is the
larger of the counts of triangles $T$ and edges $E$ in the
triangulation. OMP, as a comparison, can be implemented with
parametric complexity $O(n^\omega
k+(E+T)k+k^3)$ where $k$ is $\|z\|_1$.

\subsection{Bounded area variation}
\label{sec:bound-area-vari}

It is worth noting that our initial implementations, which neglected to include any weight calculations but instead relied only on the number of simplices in the homology, worked quite well on many of the initial test cases.  This is perhaps not terribly surprising, given that our initial inputs were well formed with relatively even sized triangles, so that comparing the number of triangles included a connected bounding chain formed a reasonable approximation to area.   Given the simplicity of the this code and speed-up of computation, we formalize this idea as follows.

\begin{theorem}\label{thm:boundedarea}
  Suppose the area of all triangles $\sigma$ in a triangulation of a
  surface $M$ are bounded by $a\leq A(\sigma) \leq A$.

  If $z$ minimizes $\|z\|_1$ subject to $\partial z = w$ and $z'$ minimizes
  $A(z')$ subject to $\partial z' = w$, so that $z$ is $L_1$-minimal and
  $z'$ is area minimal, then $a\|z\|_1 \leq A(z') \leq A\|z\|_1$. 
\end{theorem}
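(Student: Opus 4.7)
The plan is to prove the two inequalities separately, each by a short chain that combines the relevant extremal property ($L_1$-minimality of $z$ or area-minimality of $z'$) with the pointwise triangle-area bounds $a \leq A(\sigma_j) \leq A$. Throughout I write $A(y) = \sum_j A(\sigma_j)|y_j|$ for the area of a chain $y$, matching the objective of Problem \ref{eq:areaprob}.

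For the upper bound $A(z') \leq A\|z\|_1$, I would observe that $z$ is a feasible chain for the area-minimization problem (it satisfies $\partial z = w$). So area-minimality of $z'$ gives $A(z') \leq A(z)$. Then I bound $A(z)$ pointwise using the triangle upper bound:
\begin{equation*}
  A(z) = \sum_j A(\sigma_j)\, |z_j| \;\leq\; A \sum_j |z_j| \;=\; A\,\|z\|_1.
\end{equation*}

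For the lower bound $a\|z\|_1 \leq A(z')$, I would swap the roles: $z'$ is a feasible chain for the $L_1$-minimization problem (again by $\partial z' = w$), so $L_1$-minimality of $z$ gives $\|z\|_1 \leq \|z'\|_1$. Combined with the triangle lower bound applied inside $A(z')$, this yields
\begin{equation*}
  a\,\|z\|_1 \;\leq\; a\,\|z'\|_1 \;=\; \sum_j a\, |z'_j| \;\leq\; \sum_j A(\sigma_j)\,|z'_j| \;=\; A(z').
\end{equation*}

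Both steps are one-liners once the right minimizer is used on each side, so there is no real obstacle — the only thing to be careful about is not to mix up the direction of the inequalities, i.e.\ using area-minimality of $z'$ against the \emph{upper} bound $A$, and $L_1$-minimality of $z$ against the \emph{lower} bound $a$. The hypothesis that both minima exist is guaranteed by the finite-dimensional affine-space argument immediately after Lemma~\ref{lem:homotopy}, applied to both the weighted $L_1$ (area) and the unweighted $L_1$ objectives over the same affine constraint set $\{y : \partial y = w\}$, so there is nothing further to justify.
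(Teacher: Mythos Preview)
Your proof is correct and is essentially identical to the paper's own argument: both use $L_1$-minimality of $z$ against $z'$ together with the lower area bound $a$ to get $a\|z\|_1 \leq a\|z'\|_1 \leq A(z')$, and area-minimality of $z'$ against $z$ together with the upper area bound $A$ to get $A(z') \leq A(z) \leq A\|z\|_1$. You have simply written out the intermediate sums more explicitly than the paper does.
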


\begin{proof}
  Since $z$ is $L_1$-minimal, we know that $\|z\|_1 \leq
  \|z'\|_1$. Since $a$ is a lower bound of the triangle areas, this
  establishes the lower bound.

  Since $z'$ is area minimal and $A$ is an upper bound on triangle
  areas, it follows that $A(z') \leq A(z) \leq A\|z\|_1$. This
  establishes the upper bound.
\end{proof}

This means that even without solving for an area minimal bounding
chain we are able to establish a confidence interval for the area
minimal bounding chain; and unless the local triangulation density
varies considerably, we are likely to find the area minimal bounding chain by
simply minimizing the triangle count.


\subsection{Higher dimensions}
\label{sec:higher-dimensions}

Our implementation in the next section focuses on triangulated meshes, simply because that is the data most readily available as well as the most immediate application.
However, the algorithm we describe works for arbitrary dimensions of meshes and
embedded substructures: curves or surfaces in volume meshes or higher
dimensional analogues. For a $k$-dimensional submesh of a
$d$-dimensional mesh, the area bounds in Theorem~\ref{thm:boundedarea}
as well as the entire analysis of the optimization problem formulation
and its complexity carry over with the bounding chain taken to be a
$k+1$-dimensional submesh. In this setting, ``area'' would be replaced
by its $k+1$-dimensional correspondence.

\section{Results}
\label{sec:results}

In our implementation, we store the areas (or weights) as a diagonal matrix $A$, and let $D$ be the matrix which stores the boundary operator $\partial$.  Our algorithm should compute $\argmin x^T A x$ such that $Dx = z$, where $z$ is the difference of the two input cycles whose minimum area homology we wish to compute.  In the following sections, we describe the various implementations and approximations we pursued, along with results from each.

\subsection{Proof of concept implementation}
\label{sec:proof-conc-impl}

We can modify our problem to look like a standard machine learning algorithm  as follows.  In orthogonal matching pursuit, the goal is to compute $\argmin \|\hat{x}\|_1$ such that $D\hat{x} = \hat{z}$~\cite{342465}, and code to compute has been done for many languages, including the scikit-learn package in python~\cite{scikit-learn}.  Here, we will let $\hat{x} = Wx$, so that $Dx = DW^{-1}W x = DW^{-1} \hat{x} = z$, so that our new problem is $\argmin_{\hat{x}} |\hat{x}|_1$ such that $DW^{-1} \hat{x} = z$, and we can then let $x = W^{-1}\hat{x}$.  This lets us use the scikit package to solve our algorithm.

We also have code that computes $\argmin \|\hat x\|_2$ subject to
$D\hat x=\hat z$, using SciPy for computations~\cite{scipy}. All our
code and our experiments are made available at
\url{http://bitbucket.org/michiexile/minimumhomology}. 

\subsection{Sample outputs}
\label{sec:pretty-pictures}

Our images were primarily pulled from the PLY models curated by John
Burkardt~\cite{burkardt} and the Smithsonian X 3D
repository~\cite{3dsi-chair,3dsi-mammoth}.

Our own code was implemented  in Python, using SciPy~\cite{scipy} for
linear algebra and Shapely~\cite{shapely} for polygonal area
computations.  In general, we were able to run it quickly on meshes
containing up to 10\,000 triangles, which for the Smithsonian meshes
required pre-processing with MeshLab~\cite{Meshlab} to simplify and find candidate cycles on the meshes.

In Figures~\ref{fig:bigfig} and \ref{fig:secondbigfig}, 
minimum homologies are shown which were computed using simply a count of the number of triangles included in the homology.  Note that as the relevant regions of these meshes have relatively uniform area bounds on their triangles, this preliminary version of the algorithm works quite well; see Section~\ref{sec:bound-area-vari} for the relevant proofs of why these bounds hold.  

\begin{figure}
\begin{tabular}{ccc}
\includegraphics[width=.33\linewidth]{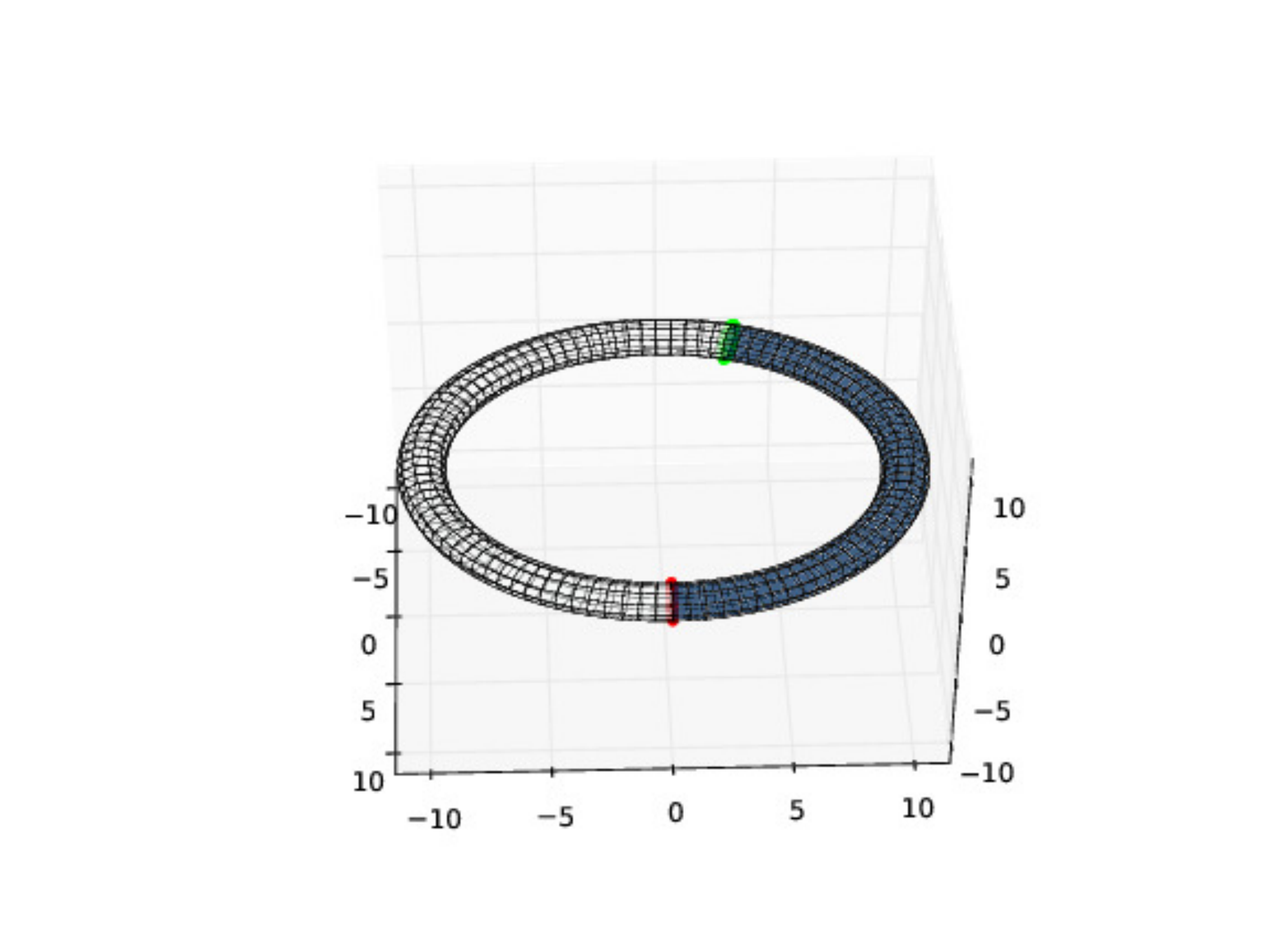}
\includegraphics[width=.33\linewidth]{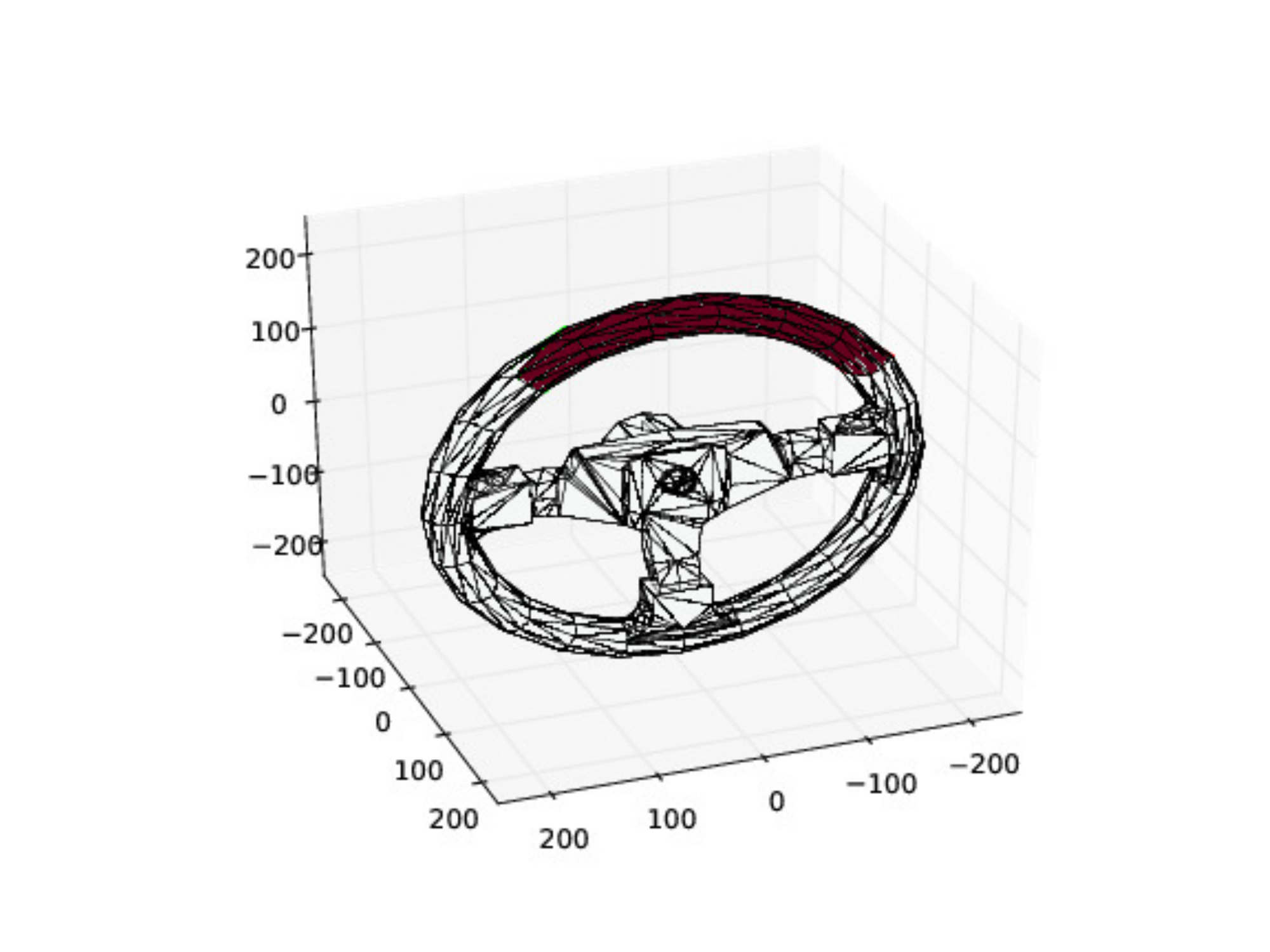}
\includegraphics[width=.33\linewidth]{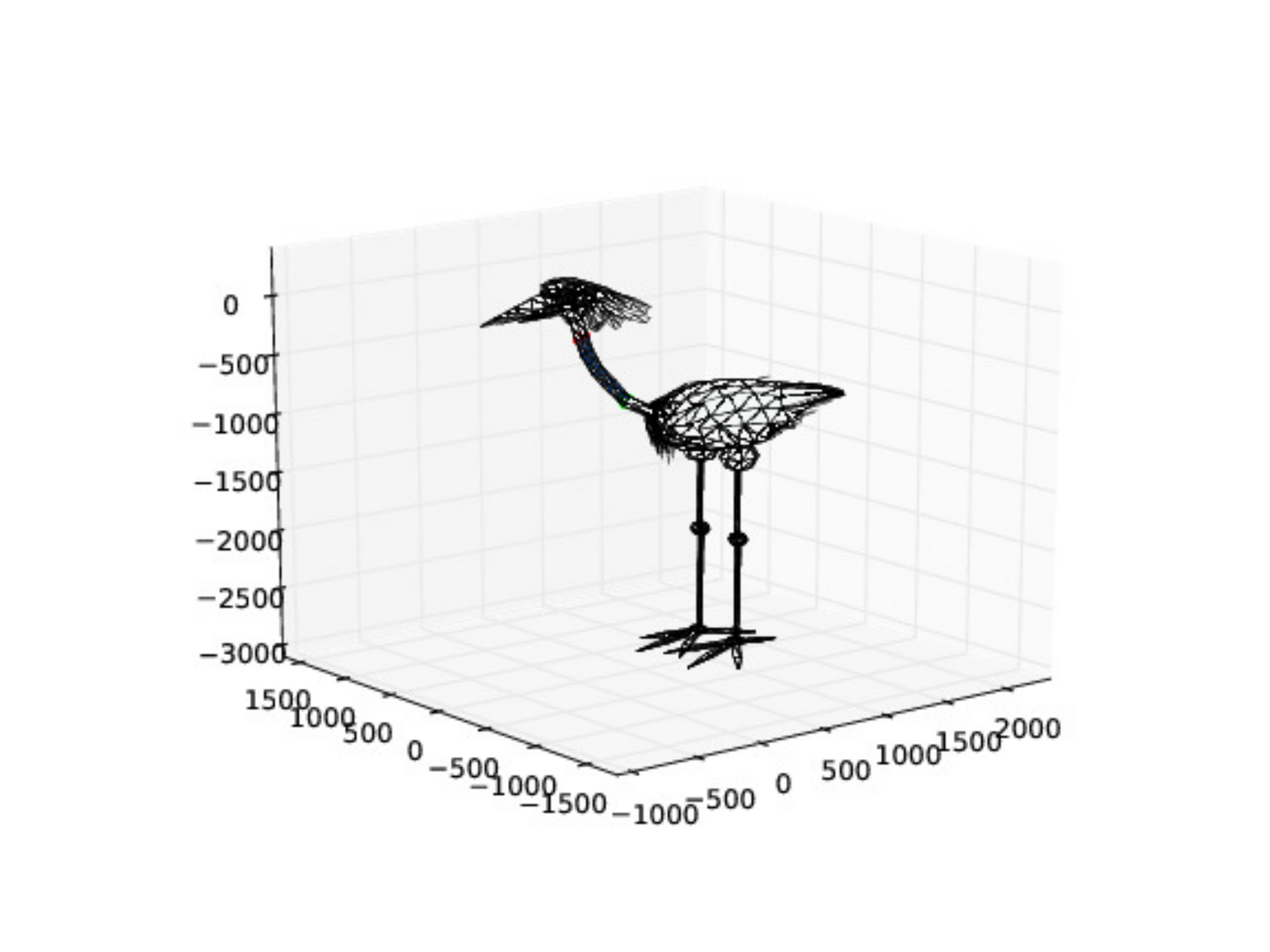}
\end{tabular}
\caption{The minimum homology (in terms of the number of triangles) between two cycles is shown shaded in each example above.}
\label{fig:bigfig}
\end{figure}

\begin{figure}
\begin{tabular}{ccc}
\includegraphics[width=.45\linewidth]{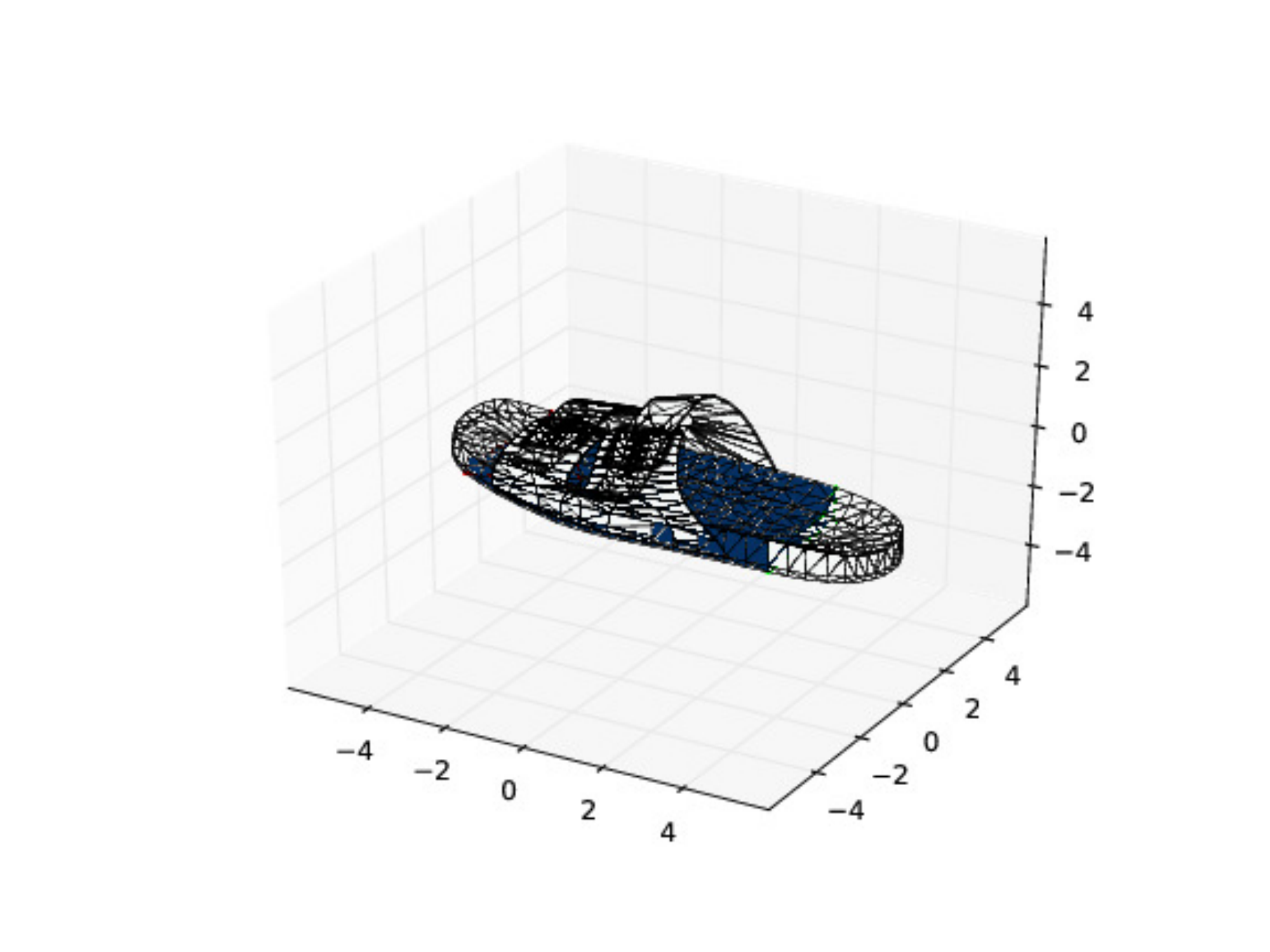}
\includegraphics[width=.45\linewidth]{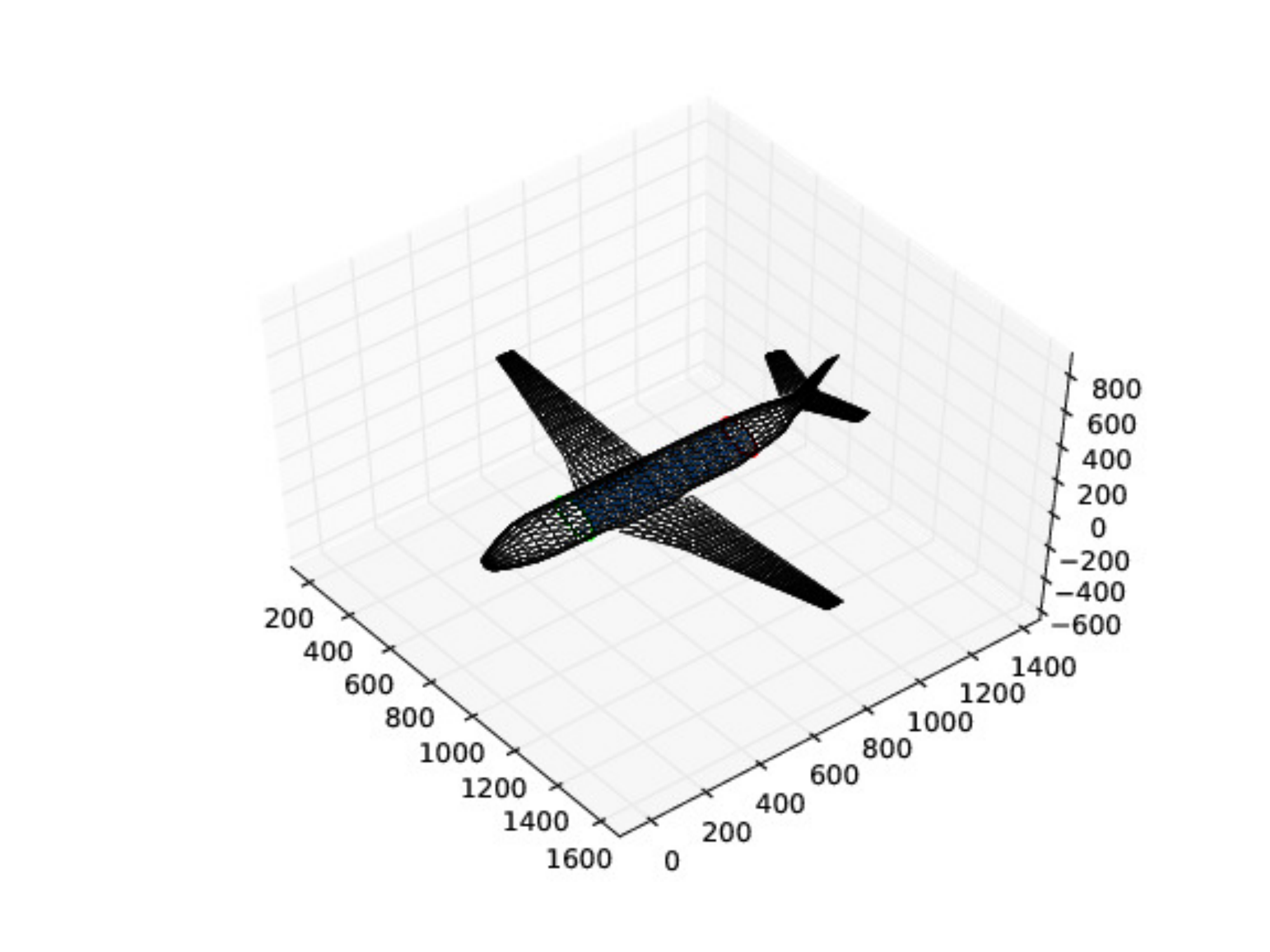}
\end{tabular}
\caption{Left: The minimum area homology (shown shaded).  Note that because the mesh does not connect the top and bottom of the sandal, the top portion is not included in the area calculation. Right: Again, note the issues in the mesh quality; since the wings are not actually connected to the body of the airplane, the wings are not included in the calculation, although ideally they should be since they would greatly increase the area between these two cycles.}
\label{fig:secondbigfig}
\end{figure}

%

%
%
%
%

Note that in Figure~\ref{fig:secondbigfig}, an interesting artifact of the mesh quality was inadvertently revealed in both examples.  A minimum homology between cycles according to the mesh is shown; however, in each of these examples, the mesh itself consists of disconnected pieces. The airplane actually has the wings detached in the mesh representations, and the sandal's top is a separate piece from the sole.  Therefore, our minimum area homology discounts these pieces.

In the Smithsonian meshes, results are far better. One favorable feature is that
the meshes are curated and post-processed for 3d-printing, and
therefore avoid the kind of self-intersection singularities we see in
Figure~\ref{fig:secondbigfig}. Another feature
is that these meshes are given in a coordinate system with millimeter
units, so area computations reflect actual areas on the meshed physical
artifacts. 

In Figure~\ref{fig:mammoth}, we show several different minimum area
homologies between pairs of input cycles; the area swept by the
homology is shown shaded, along with the distances between them given
in meters squared. We picked two cycles around the left tusk, and two
cycles around the spine, and compute all pairwise distances between
these four cycles.  

In Figure~\ref{fig:chair}, we can see even more
clearly the advantages of using homology, since the indicated cycles
are all homologous but not homotopic. We compute all pairwise
distances for three different cycles: two single closed loops and a
homologous pair of closed loops, mimicking the structure in
Figure~\ref{fig:homotopyversushomologyonsurface}. Again, areas between
them are shown shaded, and the distances here are given in thousands
of millimeters squared.  

Finally, in Figure~\ref{fig:crab}, we demonstrate more minimum homologies between cycles (or collections of cycles, in the case of the two cycles at the base of the claws).  We note in this image the results when cycles intersect, such as in the bottom left; this demonstrates a more concrete example of the simple curves shown in Figure~\ref{fig:homotopy-homology}.

\begin{figure*}
  \centering
  \begin{tabular}{ccc}
    \includegraphics[width=0.475\linewidth]{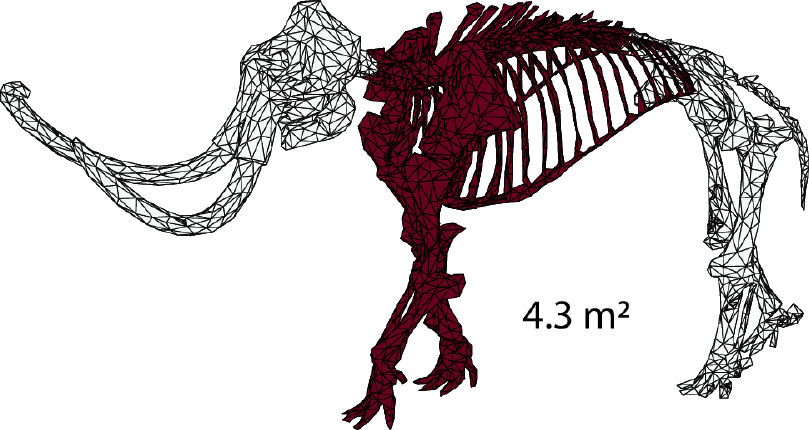} 
    \includegraphics[width=0.475\linewidth]{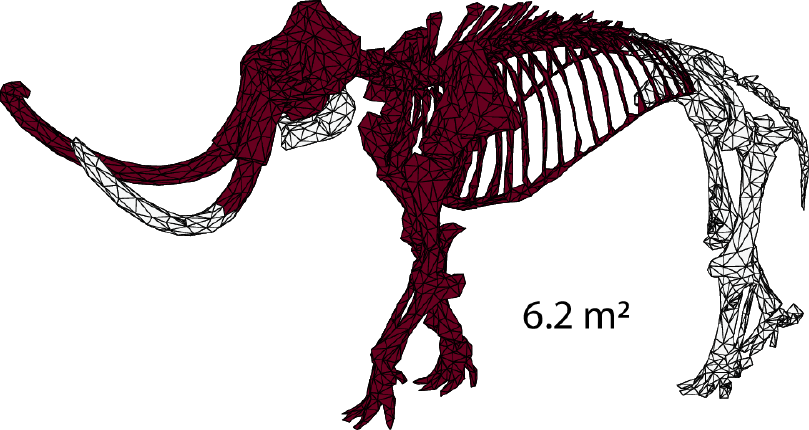} \\
    \includegraphics[width=0.475\linewidth]{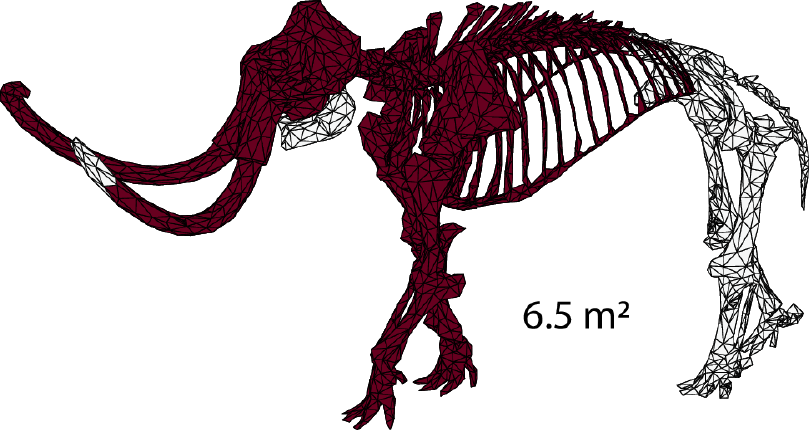} 
    \includegraphics[width=0.475\linewidth]{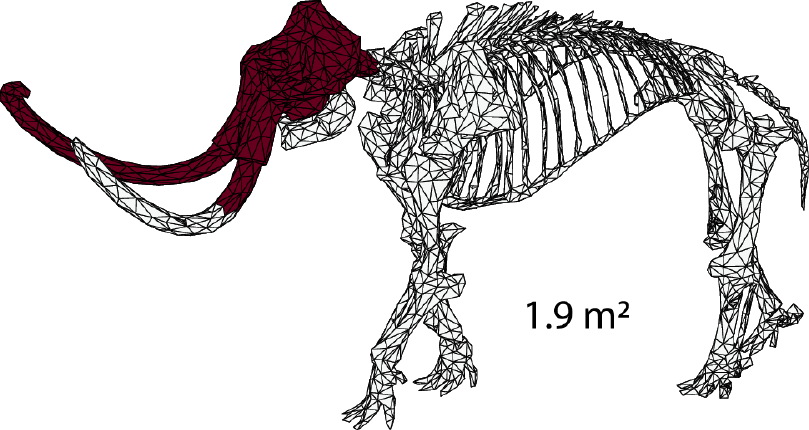} \\
    \includegraphics[width=0.475\linewidth]{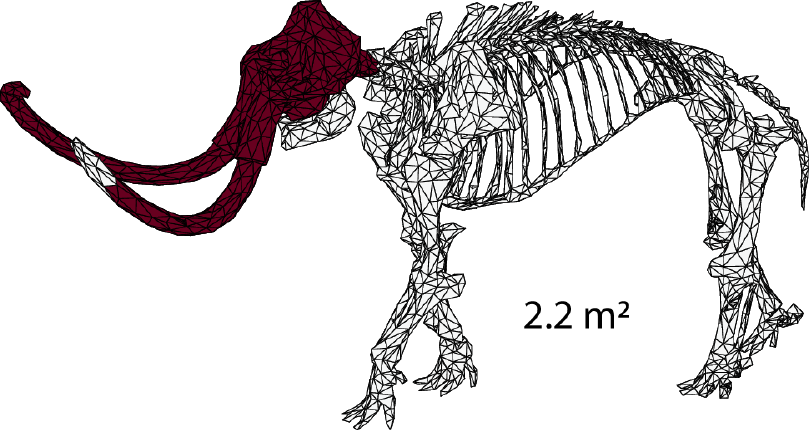} 
    \includegraphics[width=0.475\linewidth]{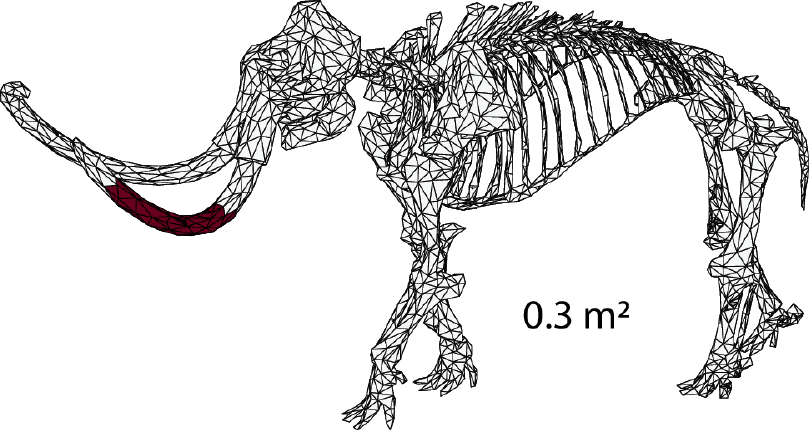} 
  \end{tabular}
  \caption{Minimum homology distances between cycles on a Smithsonian scanned model of a Mammoth skeleton }
  \label{fig:mammoth}
\end{figure*}

\begin{figure*}
  \centering
    \includegraphics[width=0.3\linewidth]{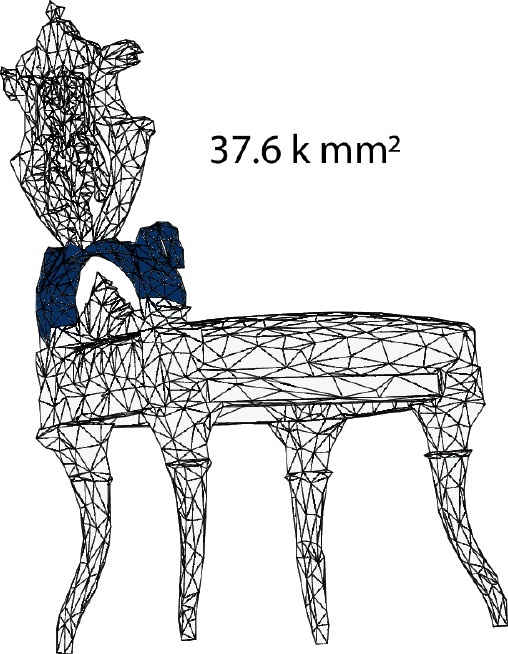} 
    \includegraphics[width=0.3\linewidth]{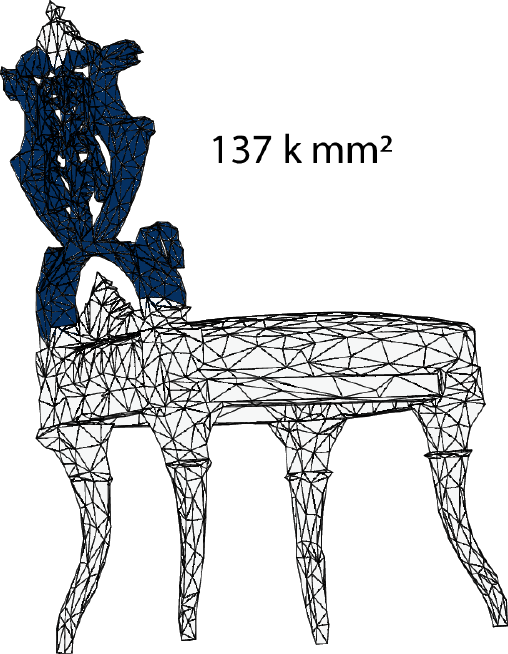} 
    \includegraphics[width=0.3\linewidth]{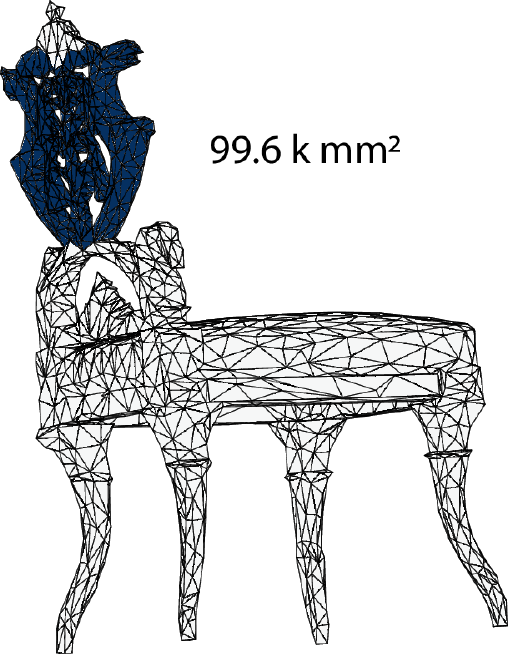} 
  \caption{Minimum homology distances between cycles on a Smithsonian
    scanned model of a renaissance chair. Notice how these bounding
    chains exemplify the absence of connecting homotopies described in
  Figure~\ref{fig:homotopyversushomologyonsurface}.}
  \label{fig:chair}
\end{figure*}

\begin{figure*}
  \centering
    \includegraphics[width=0.3\linewidth]{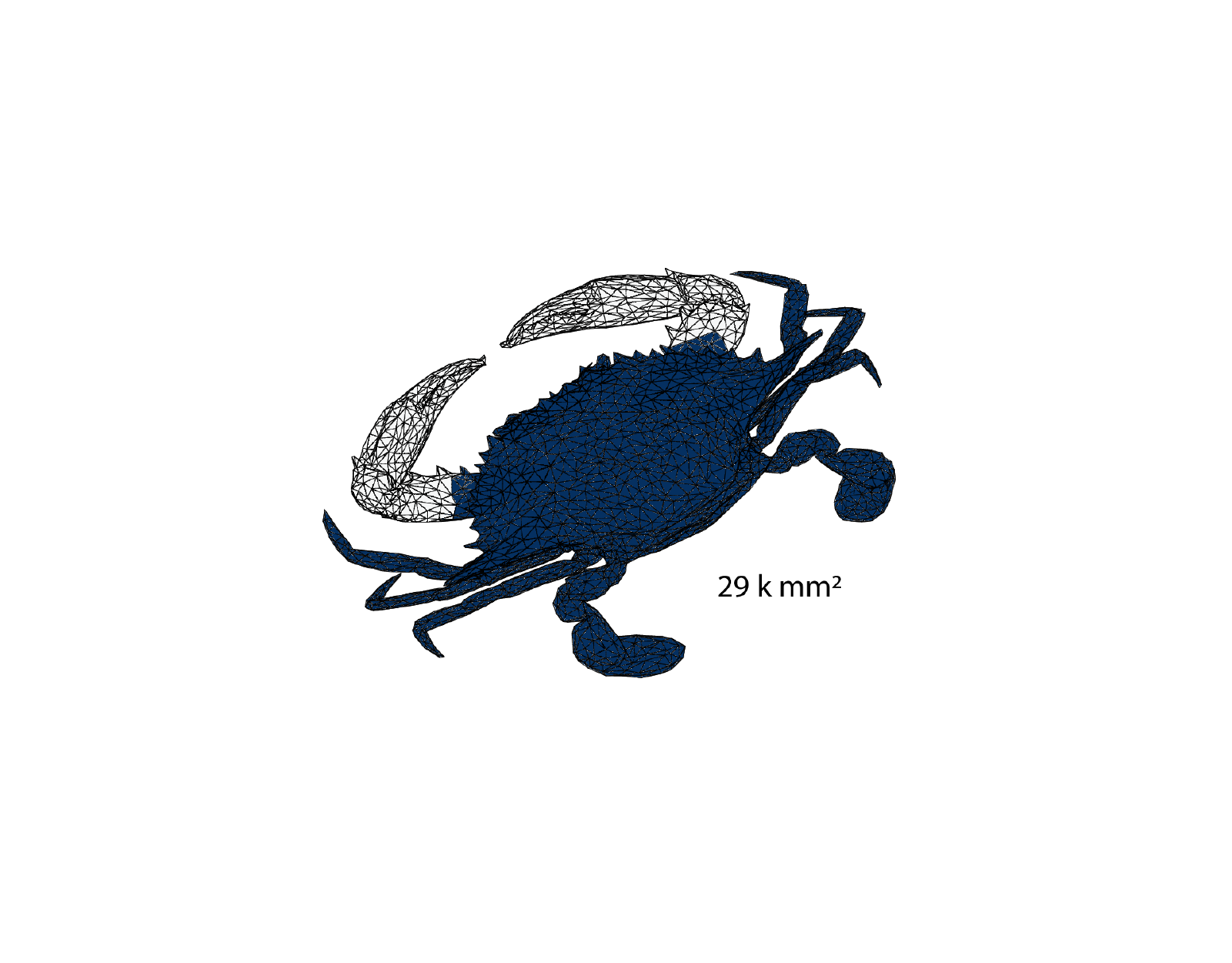} 
    \includegraphics[width=0.3\linewidth]{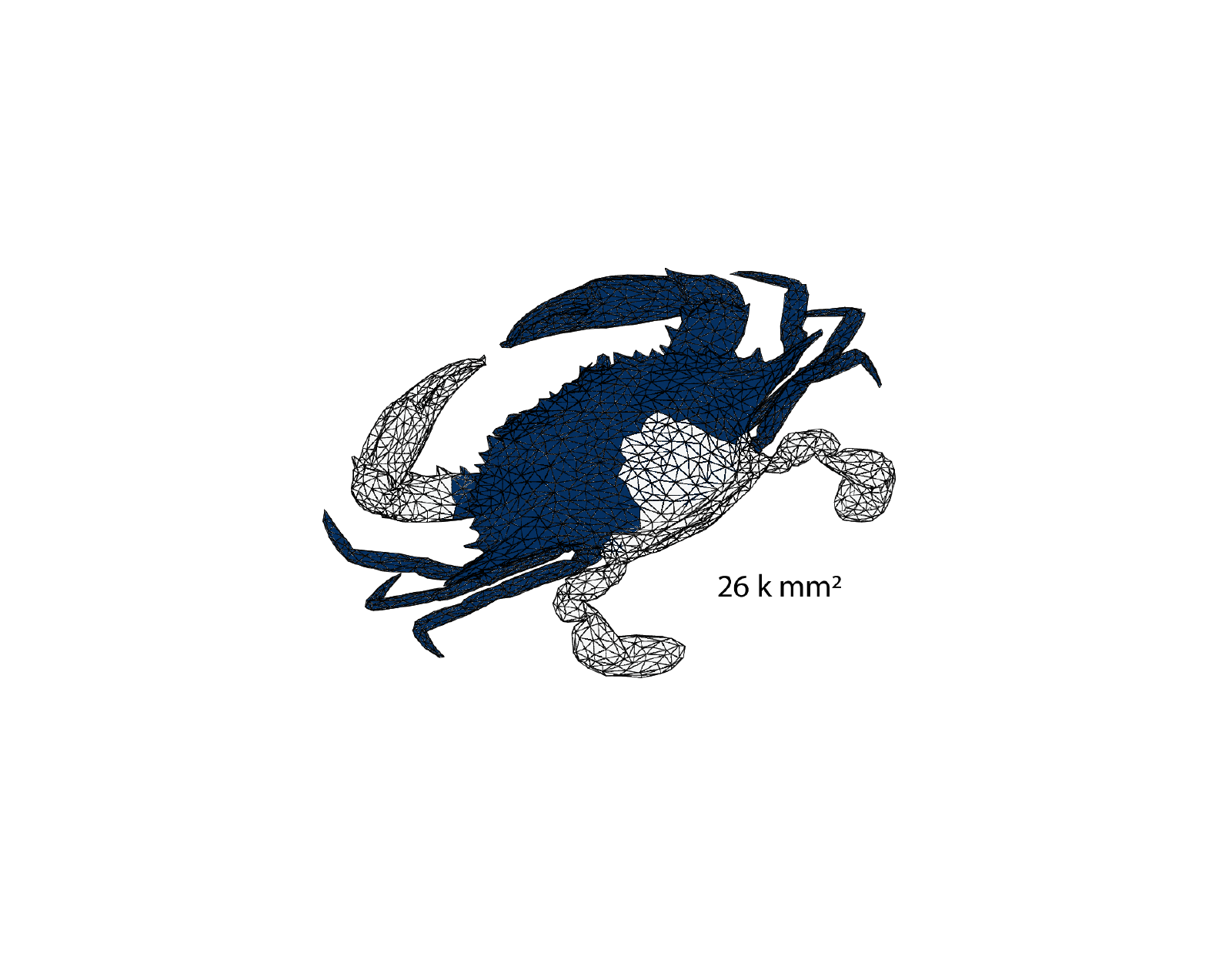} 
    \includegraphics[width=0.3\linewidth]{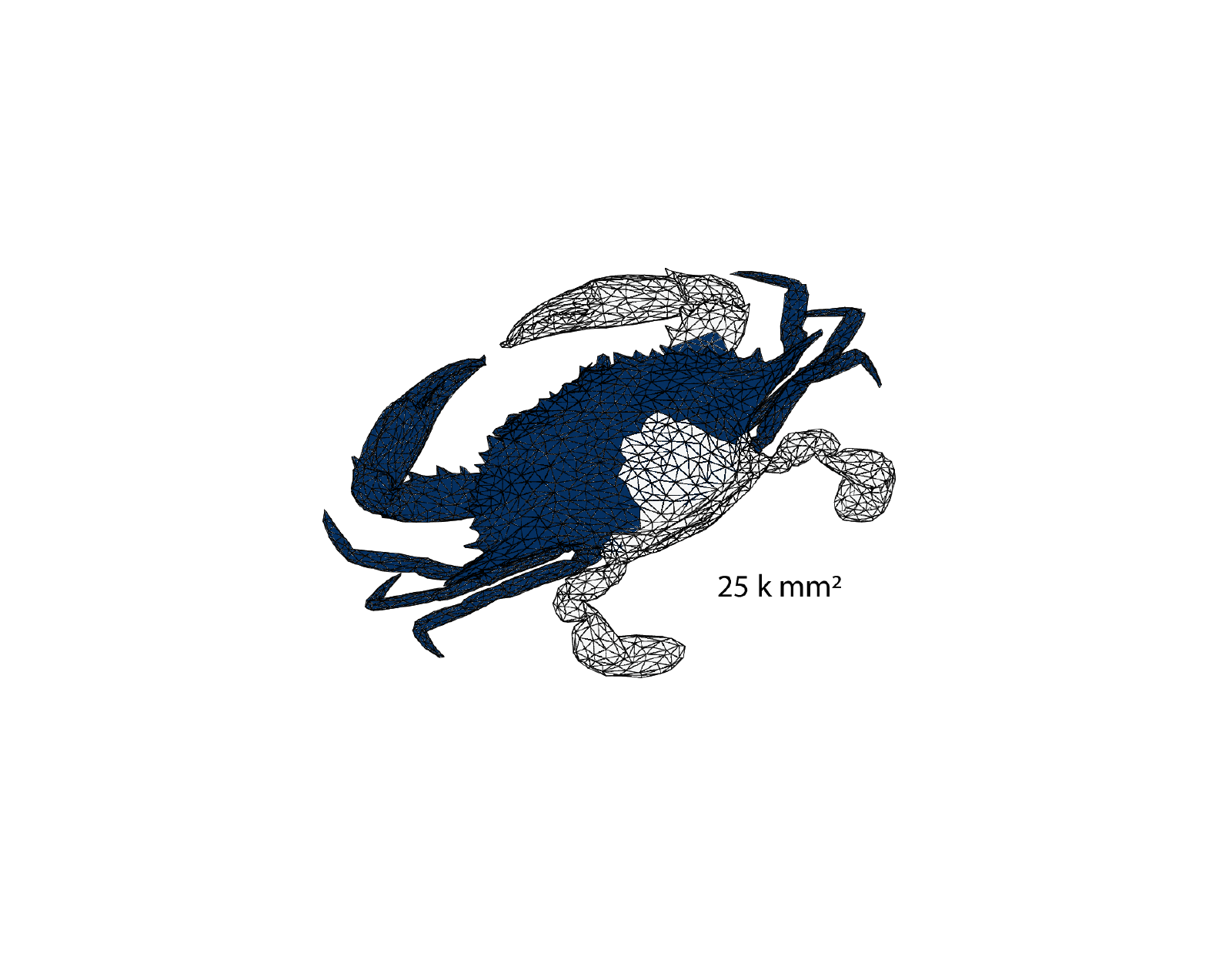} \\
    \includegraphics[width=0.3\linewidth]{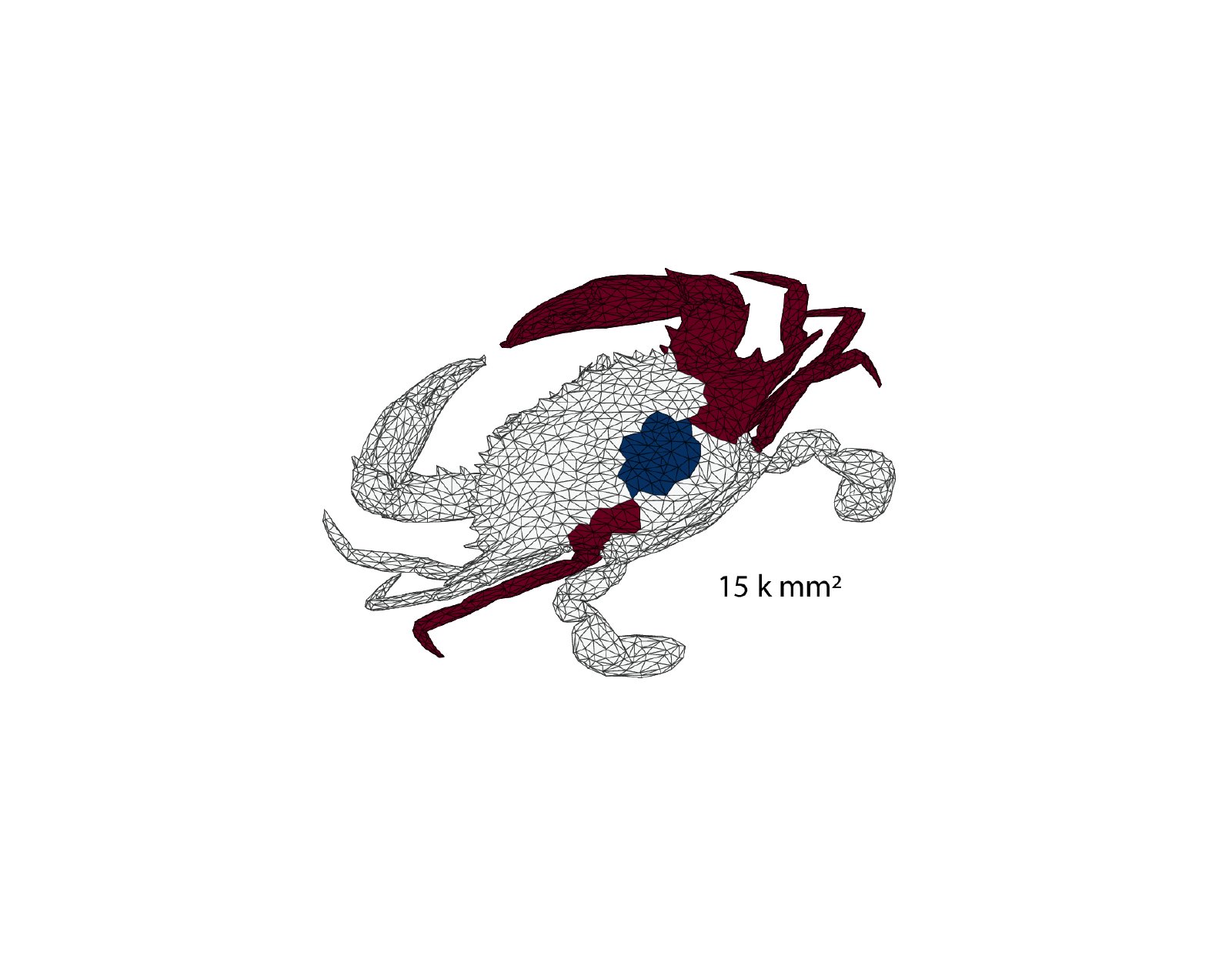} 
    \includegraphics[width=0.3\linewidth]{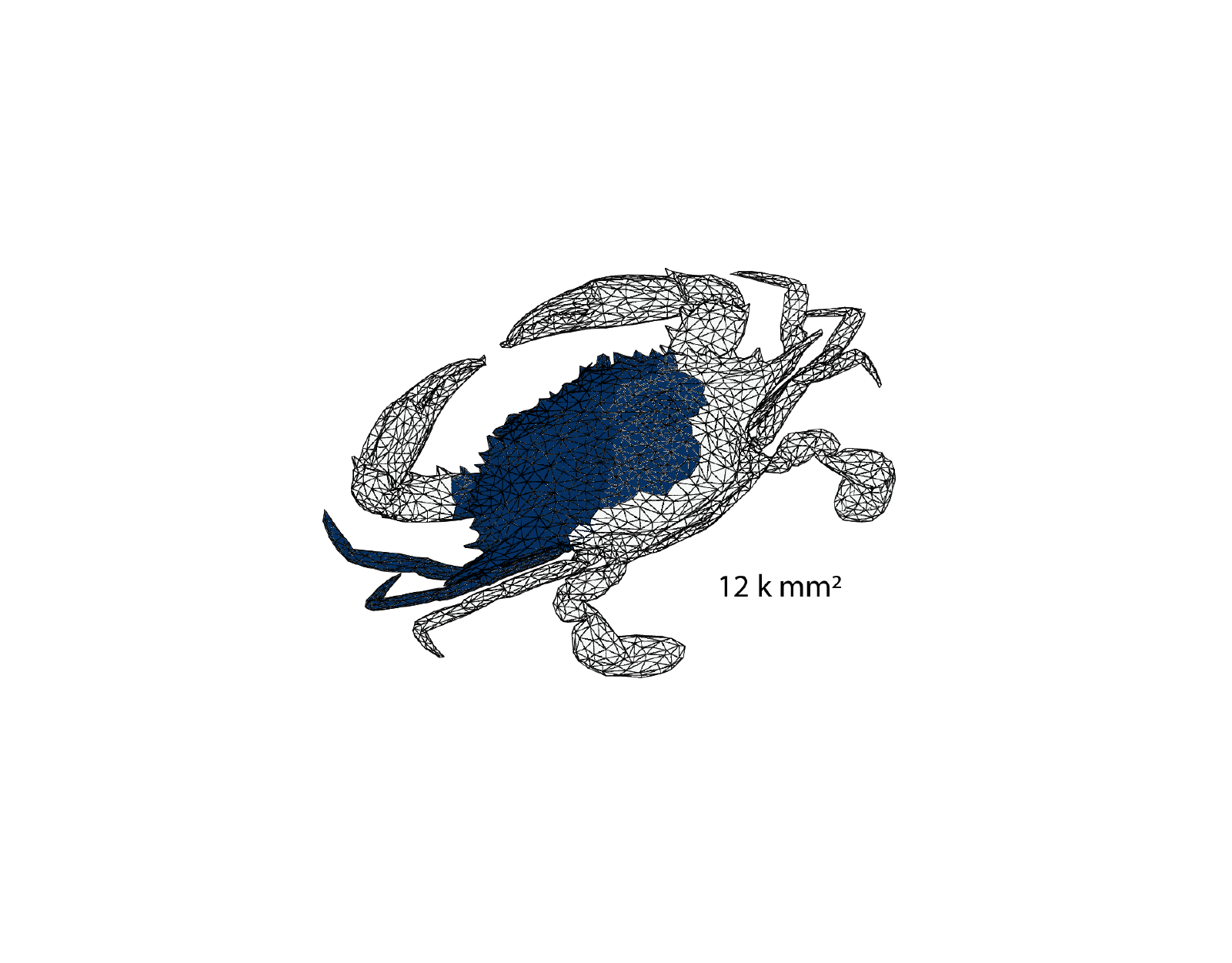} 
    \includegraphics[width=0.3\linewidth]{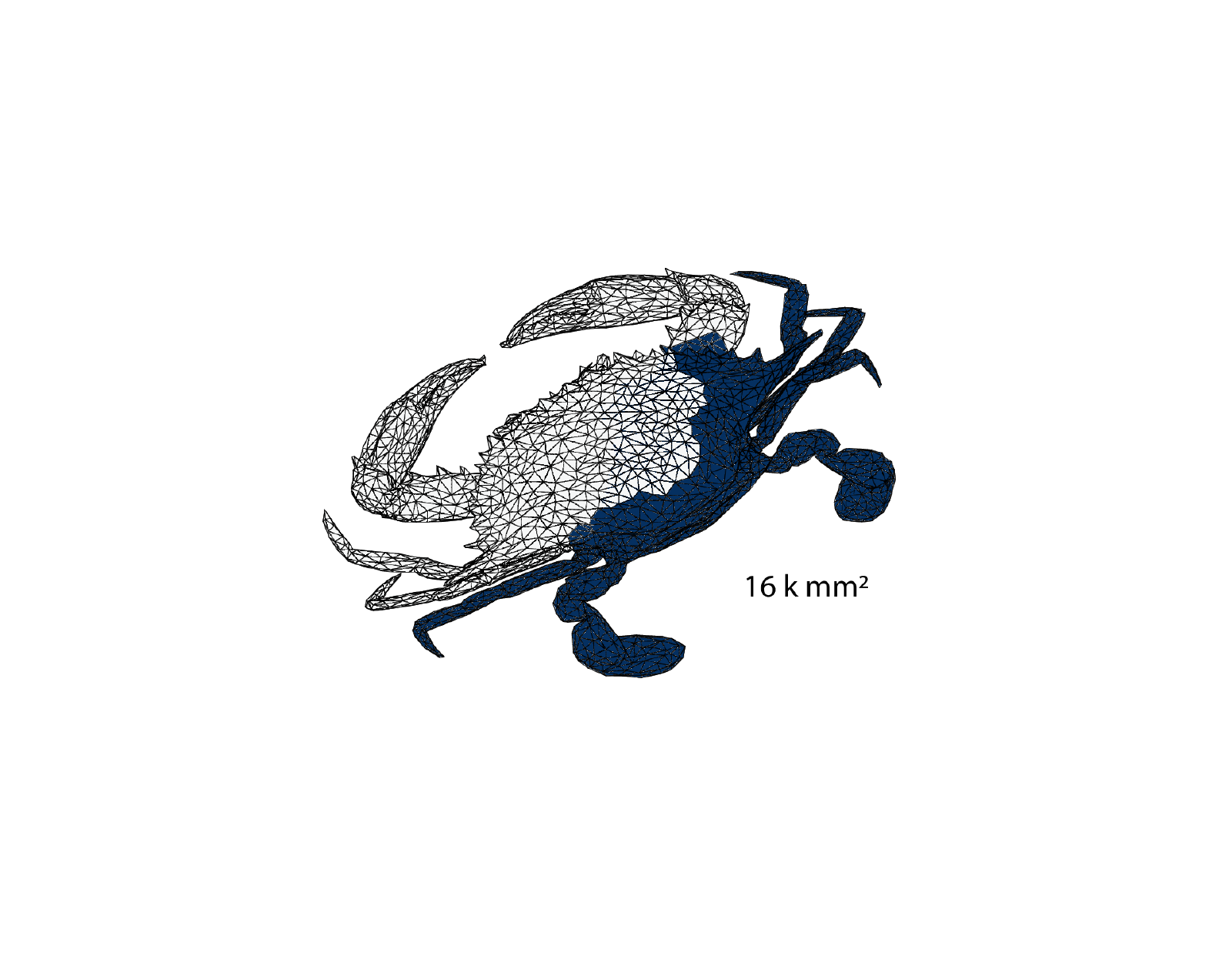} 
  \caption{Minimum homology distances between cycles on a Smithsonian scanned model of a crab. In the top row, cycles around each of the claw bases are connected to each other, and to a cycle around the shell. In the bottom row, another cycle around the shell is introduced and connected to each of the other cycles. Notice in particular how the intersecting cycles are handled at the bottom left: this is a case where the cycles do not form a simple cut graph, and the bounding chain is more involved than for our simpler examples.}
  \label{fig:crab}
\end{figure*}

\bibliographystyle{eg-alpha-doi}
\bibliography{curvesim}

\end{document}